\newcommand{\aaa}{\mathcal{A}}
\DeclareMathOperator*{\argmax}{arg\,max}
\DeclareMathOperator*{\argmin}{arg\,min}
\newcommand{\vs}{\vspace{-1mm}}
\newcommand{\hs}[1][0.5]{\hspace{-#1mm}}
\newcommand{\aem}{a^{\rm e}}
\newcommand{\A}{{\rm A}}
\newcommand{\M}{{\rm M}}
\newcommand{\K}{\mathcal{K}}
\newcommand{\R}{\mathbb{R}}
\DeclarePairedDelimiter{\ceil}{\lceil}{\rceil}
\DeclarePairedDelimiter{\floor}{\lfloor}{\rfloor}
\newtheorem{theorem}{Theorem}
\newtheorem{defn}{Definition}
\newtheorem{lemma}{Lemma}%[theorem]
\normalsize\title{\LARGE \bf
Exploiting an Adversary's Intentions in Graphical Coordination Games}
\author{
Brandon C. Collins and
Philip N. Brown
\thanks{Funding support provided by Colorado State Bill 18-086, the UCCS Undergraduate Research Academy, and the UCCS Committee on Research and Creative Works.}
\thanks{Brandon C. Collins (corresponding author) and Philip N. Brown are with the Department of Computer Science at the University of Colorado Colorado Springs, {\tt \{bcollin3,philip.brown\}@uccs.edu}.}
}
\begin{document}

\maketitle
%%%%%%%%%%%%%%%%%%%%%%%%%%%%%%%%%
%%%%%%%% THIS IS THE CODE THAT MAKES PAGE NUMBERS
% \thispagestyle{plain}
% \pagestyle{plain}
%%%%%%%%%%%%%%%%%%%%%%%%%%%%%%%%%%%%%%%%%%%

\begin{abstract}
    How does information regarding an adversary's intentions affect optimal system design?
    This paper addresses this question in the context of graphical coordination games where an adversary can indirectly influence the behavior of agents by modifying their payoffs.
    We study a situation in which a system operator must select a graph topology in anticipation of the action of an unknown adversary.
    The designer can limit her worst-case losses by playing a \emph{security strategy}, effectively planning for an adversary which intends maximum harm.
    However, fine-grained information regarding the adversary's intention may help the system operator to fine-tune the defenses and obtain better system performance.
    In a simple model of adversarial behavior, this paper asks how much a system operator can gain by fine-tuning a defense for known adversarial intent.
    We find that if the adversary is weak, a security strategy is approximately optimal for any adversary type; however, for moderately-strong adversaries, security strategies are far from optimal.
\end{abstract}

\graphicspath{{figures/}}

\section{Introduction}

Many engineered and social systems consist of a collection of agents making decisions based on locally available information.
For example, a group of unmanned vehicles performing surveillance in a hostile area may use a distributed control strategy due to communication constraints;
social systems are intrinsically distributed, as individuals make decisions based on personal objectives and in response to the behavior of friends and acquaintances.
For example, the decision to adopt a recently released technology may depend both on the quality of the item itself and on friends' choices~\cite{Kreindler2014}.

In a multiagent system, whether engineered or social, an adversary may be able to influence the behavior of individual agents to indirectly influence the overall behavior of the system.
In principle, a defender can compute and deploy a \emph{security strategy}, which effectively plans for a maximally-malicious adversary.
Security strategies allow the defender to guarantee a minimum level of system performance, but as worst-case defenses, they may be suboptimal against a real adversary.
In this paper, we investigate the cost of employing security strategies against poorly-characterized adversaries, in particular studying how information regarding an adversary's \emph{intent} and \emph{strength} can be used to fine-tune defensive measures.

A popular theoretical framework in this area is to model the agents in a multiagent system as players in a game, endow the players with a distributed learning algorithm (such as the well-studied log-linear learning rule~\cite{Blume1993, Shah2010}), and analyze the emergent behavior resulting when the agents execute their algorithm.
% Although agents predominantly maximize their utilities under log-linear learning, selection of the efficient equilibrium is achieved by allowing agents to choose suboptimally with some small probability to allow for an exploration of the action space.
%
%
% Log-linear learning has been studied for prescribing control laws in many distributed engineering systems \cite{Marden2013,Zhu2009,Goto2010,Staudigl2012,Fox2010}, as well as for analyzing the emergence of conventions in social systems \cite{Young1993,Shah2010}.
% The equilibrium selection properties of log-linear learning extend beyond coordination games to the class of potential games \cite{Monderer1996}, which often can be used to model engineering systems where the efficient Nash equilibrium is aligned with the optimal system behavior \cite{Marden2007,Marden2013,Wolpert2001}.
This has prompted much recent interest the question: how can adversarial manipulation  alter the emergent behavior of distributed decisionmaking~\cite{Borowski2015,Brown2017f,Canty2018,Paarporn2019}?

Work in this area has characterized how an adversary's capabilities and intelligence impacted their ability to influence behavior~\cite{Brown2017f,Canty2018}, as well as investigated simple questions of {defense}~\cite{Paarporn2019}.

In this paper, we ask how a system operator can leverage information about an adversary's \emph{goals} (or intent) to select an effective defensive posture.
Here, we suppose that the system operator must select a defensive posture without knowing which type of adversary she will face.
That is, a central question of this paper is this: how should a system operator implement effective defenses if the \emph{goals} of the adversary are unknown?
Naturally, one option is that the planner could adopt a pessimistic view and plan for the worst-case adversary type; however, we demonstrate that doing so can be quite costly in some circumstances.

We study two models of adversary intent:
% \vs\vs
\begin{itemize}
\item A \emph{malicious} adversary wishes to harm the system operator as much as possible, and
\item An \emph{advertiser} adversary has a vested interest in some specific agent action.
\end{itemize}
% Note that these two goals are distinct and lead to different behavior, but that if either adversary accomplishes its goal, the system operator's objective will not be maximized.
%
%
We investigate the system operator's defensive decision problem as affected by the adversary's strength (measured by how many friendly agents the adversary is able to influence) and the adversary's goals.
We find that there are distinct regimes of adversary strength:
when the adversary is ``weak'' (only able to influence a small set of agents), a security strategy is nearly optimal.
% That is, if the operator plans for a malicious adversary but an advertiser adversary is realized, the resulting system welfare is asymptotically optimal for large systems.
On the other hand, if the adversary is moderately powerful, then security strategies are always far from optimal, and can only garner a maximum of a fraction of the total available welfare for an advertiser.

\section{Model}\label{sec:model}

\subsection{Model of agent and adversarial behavior}
%%%%%%%%%%%%%%%%%%%%%%%%%%%%%%%%%%%%%%%%%%%%%%%%%%%%%%%%%%%%%%%%%%%%%%%%%%%%%%%%%%%%%%%%%%%%%%%%%%%%%%%%%%%%%%%%%%
To study adversarial influence we use the $n$-player \emph{graphical coordination game} \cite{Kearns2001,Young2011, Montanari2010}.
% which a generalization of the two player coordination game \cite{Cooper1999, Ullmann-Margalit1977}.
% where the game matrix game \eqref{eq:matrix} is played at each edge.
The foundation of a graphical coordination game is a simple two agent coordination game\cite{Cooper1999, Ullmann-Margalit1977}, where each agent must choose one of two conventions, $\{x,y\}$, with payoffs depicted by the following payoff matrix\eqref{eq:matrix}: \vspace{-4mm}
% Note that unilaterally deviating from $(y,y)$ for an individual agent incurs an immediate payoff loss of $1$; hence, myopic agents may be reluctant to deviate, stabilizing the inefficient equilibrium $(y,y)$.

\begin{equation}
\setlength{\extrarowheight}{2pt}
\mbox{
\begin{tabular}{r|c|c|}
\multicolumn{1}{r}{}	&\multicolumn{1}{c}{$x$}	&\multicolumn{1}{c}{$y$}\\
\cline{2-3}$x$			&$1+\alpha,\,1+\alpha$	&$0,\,0$\\
\cline{2-3}$y$			&$0,\,0$				&$1,\,1$\\
\cline{2-3}
\end{tabular}}\label{eq:matrix}
\end{equation}

The game is played on undirected graph $g=(N,E)$ between agents $N=\{1,2,3,...,n\}$ using edge set $E$. 
Agent $i$ plays the two-player coordination game with agent $j$ if $(i,j)\in E$ and $i\neq j$. 
We term $\alpha\in(0,1]$ as the \emph{payoff gain} that ensures the $x$ convention has an intrinsic coordination benefit. 
An agent's total payoff is the sum of payoffs it receives in the two-player games played with its neighbors ${\cal N}_i = \{j \in N : (i,j) \in E\}$, i.e., for a joint action $a = (a_1, \dots, a_n) \in \aaa := \{x,y\}^n$, the utility function of agent $i$ is
\begin{equation}\label{e:original utility}
U_i(a_1, \dots, a_n) = \sum_{j \in {\cal N}_i} u(a_i, a_j) - \left\{
\begin{array}{ll}
0 & \mbox{if $a_i = x$} \\
p & \mbox{if $a_i = y$}
\end{array}\right.,
\end{equation}
where $u(\cdot)$ is chosen according to payoff matrix~\eqref{eq:matrix}.
Here, the parameter $p\in(0,\alpha)$ indicates that each agent experiences some small personal cost for selecting the $y$ convention.
Joint actions $\vec{x}:=(x,x,\ldots,x)$ and $\vec{y}:=(y,y,\ldots,y)$, where either all players choose $x$ or all players choose $y$, are Nash equilibria of the game for any graph;
other equilibria may also exist depending on the structure of graph $g$.

Suppose agents in $N$ interact according to the graphical coordination game above, specified by the tuple $(g,\alpha,p)$.
% , with underlying graph $g = (N,E)$, alternatives $\{x,y\}$, and payoff gain $\alpha\in(0,1)$ and $y$-penalty $p\in(0,\alpha)$.
% We denote the joint action space by $\mathcal{A} = \{x,y\}^n,$ and we write
% $$(a_i,a_{-i}) = (a_1,a_2,\ldots,a_i,\ldots,a_n)\in\mathcal{A}$$
% when considering agent $i$'s action separately from other agents' actions.
The system operator's objective is that the agents maximize the \emph{system welfare}, or the sum of nominal agent utilities:
\begin{equation} \label{eq:welf}
    W(a) = \sum_{i\in N} U_i(a).
\end{equation}

An adversary seeks to influence the system's emergent behavior by modifying agent utility functions; the adversary accomplishes this by selecting a set of agents and posing as a neighbor whose action is fixed at $y$.%
\footnote{
Note that an adversary could play fixed action $x$ as well, as considered in~\cite{Canty2018,Paarporn2019}, and that situations exist in which a highly intelligent adversary could influence users to select any arbitrary configuration of actions.
For this study, in order to focus on the specific questions of adversary intent, we consider the case that the adversary impersonates only $y$-playing agents and leave the more detailed study of $x$-playing adversaries to future work.
}
Effectively, the adversary spreads ``impostor'' agents throughout the network, attaching each impostor to a specific friendly agent.
We write $K=(i_j)_{j=1}^k$ to represent the adversary's selection, where $i_j$ denotes the index of the friendly agent connected to adversary vertex $j$; we write $\K$ to denote the set of all feasible adversary selections.
That is, each impostor vertex connects to one and only one friendly agent, but note that we allow multiple impostor vertices to connect to a single friendly agent.
Agents' utilities, $\tilde{U}:\mathcal{A}\times \K\to\R$, are now a function of adversarial and friendly behavior, defined by:
\begin{equation}\label{e:new utility}
\tilde{U_i}(a,K) =
U_i(a)	+ \big| \left\{ j : i_j\in K \right\} \big| \mathds{1}_{a_i=y}  . % THIS IS PRETTY TERRIBLE
\end{equation}
That is, in addition to the ordinary payoff from coordinating with its neighbors, if an agent selects $y$, it receives an additional payoff of $1$ for each impostor node that it is connected to.
%Thus, by posing as additional players in the game, the adversary can manipulate the utilities of some agents, providing an extra incentive to choose the inferior alternative~$y$.

Note that the payoffs in this game define an exact potential game~\cite{Monderer1996}.
Given action profile $a$, let $E_x(a)$ ($E_y(a)$) be the set of edges connecting $x$-playing ($y$-playing) agents, and let $N_y(a)$ be the set of $y$-playing agents, and let $N_{Ky}(a)$ be the \emph{number} of total adversarial attacks that are successfully converting an agent to $y$ in $a$.
Then it can be shown that the game's potential function is given by
\begin{equation}
    \Phi(a) = (1+\alpha)|E_x(a)| + |E_y(a)| - pN_y(a) + |N_{Ky}(a)|.
\end{equation}

\subsection{Learning and Emergent Behavior}

Now, suppose that agents in $N$ update their actions over time according to some stochastic process such as the well-studied \emph{log-linear learning} algorithm.
Log-linear learning is known to select a potential function maximizing state with high probability in the long run; it is said that the potential function maximizers of the game are the \emph{stochastically stable states} of log-linear learning.
For reasons of space, we omit the full details, which can be found in~\cite{Alos-Ferrer2010,Brown2017f,Canty2018}.

Accordingly, as we assume the agents are employing log-linear learning and thus selecting potential-maximizing states, Definition~\ref{def:emergent} gives our central solution concept:

\begin{defn} \label{def:emergent}
Given adversarial influence $K$, $\alpha\in(0,1]$, and $p\in(0,\alpha)$, action profile $a\in\aaa$  is called an \emph{emergent state} with respect to $K$, written $\aem(g,\alpha,p,K)$, if for each component of graph $g$, all agents in that component are selecting the same action%
\footnote{
This requirement essentially reflects a situation in which the adversary, upon allocating attacks to a graph component, randomizes the identities of the nodes within that component which are being influenced at each time instance, much like the ``uniformly random'' adversary type in~\cite{Brown2017f}.
For reasons of space, we leave a detailed characterization of this interpretation for future work.
}
(either $x$ or $y$) and that action maximizes the component's potential function.
\end{defn}

% The unanimity condition of Definition~\ref{def:emergent} simplifies the process of finding an emergent state of a given graph component.
Suppose component $c$ with $m_c$ edges and $n_c$ nodes is being attacked by $k_c$ impostor nodes.
Writing $a_c$ to denote the action profile restricted to $c$, we can write the potential function of this component as
\begin{equation}
    \Phi_c(a_c,k_c)=
    \begin{cases} 
       m_c+m_c\alpha & \mbox{if }a_c=\vec{x} \\
       m_c-n_cp+k_c & \mbox{if }a_c=\vec{y} \\
   \end{cases}
\end{equation}

Let  $k^*(c)$ be defined as the minimum $k$ required such that the emergent state of component $c$ is weakly $\vec{y}$.
\begin{equation}
    k^*(c)=\min\{k \mid  \Phi_c(\vec{y},k)\ge\Phi_c(\vec{x},k)\}.
\end{equation}
It can be shown that
\begin{equation}
    k^*(c)=\ceil{m_c\alpha+n_cp}.
\end{equation}

% Throughout, we write $k$ to denote the number of friendly agents the adversary can connect to, called the adversary's \emph{capability}.
%Given $k$, $\sss_k:=\left\{ S\in2^N, |S|=k \right\}$ denotes the set of all possible influence sets.
In this paper, we consider two distinct types of adversary, which we term \emph{advertiser} $\A(k)$ and \emph{malicious} $M(k)$.
An advertiser adversary can be thought of as a promoter for the $y$ action (perhaps a marketing manager for a competing product or a political rival), and arranges its attacks in an attempt to maximize the number of agents selecting $y$ in an emergent state.
That is, the advertiser desires to maximize the objective function
\begin{equation} \label{eq:ua}
U^\A(\aem,K;g) = \big| \{ i : \aem_i=y \} \big|.
\end{equation}
A malicious adversary desires to maximize the objective function\vs\vs
\begin{equation} \label{eq:um}
U^\M(\aem,K;g) = -\sum_{i\in N} U_i(\aem),
\end{equation}
equivalently minimizing the system's total welfare.

In this paper, we study a situation in which the system operator selects a network topology $g$ in anticipation of an adversary attack.
Thus, we nominally have a Stackelberg game in which the operator acts as leader and the adversary acts as follower.
For any network topology $g$, we assume that the adversary plays a best response to $g$; for adversary type $T\in\{\A(k),\M(k)\}$, we write 
\begin{equation}
K^T(g) = \argmax_{K\in\K} U^T(\aem,K;g)
\end{equation} to denote the adversary's best response to $g$ at an emergent state (breaking ties in favor of the adversary). 
For convenience, we write 
\begin{equation} \label{eq:W}
W(g,T,\alpha,p) \triangleq W\left(\aem\left(g,\alpha,p,K^{T}(g)\right)\right),
\end{equation} 
or simply $W(g,T)$ when the dependence on $\alpha, p$ is clear.
The system operator's nominal goal is to select a graph topology $g$ which maximizes~\eqref{eq:W}.
For adversary type $T\in\{\A,\M\}$, we write
\begin{equation}
g^*(T) \triangleq \argmax_g W(g,T,\alpha,p).
\end{equation}

For simplicity, we assume throughout that the system contains exactly twice as many agents as graph edges; writing $m:=|E|$, we have that $2m=n$.
We note that $n>2m$ leaves a quantity of nodes that a system designer can never interact with.
Conversely, for $2m<n$ the possible edge topologies are a strict subset of the case when $m=2n$.
We often utilize several specific graph topologies, defined as:
\begin{itemize}
    \item ${\rm SPARSE}$: A graph with exactly $m$ 2-node components and no isolated nodes.
    \item ${\rm LINE}$: A graph with a single sparse giant component described by a line graph containing $m+1$ nodes; this graph contains $m-1$ isolated nodes.
    \item ${\rm COMPLETE}$: A graph with a single dense giant component described graph containing all edges and the minimum-possible number of nodes.
\end{itemize}

In this paper, we study decision making in the face of strategic uncertainty; accordingly, we ask the following: if the system operator is uncertain about the adversary's goal (i.e., type), how should she select graph topology?
To begin to address this question in the context of network design for graphical coordination games, in this paper we study the cost of mischaracterizing the adversary type for a variety of situations.
That is, for $T,T'\in\{\A(k),\M(k)\}$, we define the \emph{relative regret} associated with type $T$ as 
\begin{equation} \label{eq:regret}
R(T) = \frac{W(g^*(T'),T',\alpha,p) -  W(g^*(T),T',\alpha,p)}{W(g^*(T),T',\alpha,p)},
\end{equation}
which captures the fraction of welfare that is lost by planning for type $T$ when the actual realized adversary type is $T'$.

\section{Our Contributions}

We begin with a result which expresses a threshold on $k$ below which the operator can guarantee a completely-coordinated graph.
For adversary type $T\in\{M,A\}$, we define $k_{\rm max}(T(k))$ as the largest value of $k$ \emph{below which} the operator has a graph $g$ such that $\aem(g,\alpha,p,K^T(g))$ is fully coordinated on $x$ actions. 
That is, if $k<k_{\rm max}(T)$, then the operator can be sure that a strategic adversary of type $T$ will not cause any coordination on $y$ in any emergent state, even if isolated nodes are playing $y$. 
Note that the operator always prefers to avoid $y$ coordination if possible (see Lemma~\ref{component slope>lone node slope}).
In the following, $d$ is defined to be the minimum integer that satisfies $m\le\binom{d}{2}$.
We introduce the following technical definition:
\begin{defn}
A triple $(m,\alpha,p)$ is said to be \emph{ordinary} if there exists some graph $g=(N,E)$ such that every component $c$ in $g$ satisfies $n_c<k^*(c)$.
\end{defn}

\begin{theorem}
    \label{kmax lemma}
    We find $k_{\rm max}(\cdot)$ for malicious adversary $M$ and advertiser adversary $A$.
    % \begin{itemize}
    %     \item $k_{\rm max}(M)=\ceil{m\alpha+mp+p},$
    %     \item  $k_{\rm max}(A)=2m-2d+\ceil{m\alpha+dp}.$
    % \end{itemize}
    $$k_{\rm max}(M)=\ceil{m\alpha+mp+p},$$
    \begin{equation*}
    k_{\rm max}(A)\hs=\hs
        \begin{cases} 
           2m-2d\hs+\hs\ceil{m\alpha+dp} & \hspace{-2mm}\mbox{if }(m,\alpha,p)\mbox{ is ordinary} \\
           k_{\rm max}(M) & \hspace{-2mm}\mbox{otherwise.} \\
        \end{cases}
    \end{equation*}
    %where $g_{res}$ is some graph where all non trivial components satisfy the advertiser resistant requirement.
\end{theorem}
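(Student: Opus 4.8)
The plan is to characterize, for each adversary type and each budget $k$, exactly when the operator possesses a graph whose emergent state has no $\vec{y}$-coordinated component; $k_{\rm max}(\cdot)$ is then the largest budget below which such a graph exists. Throughout I use the component potential $\Phi_c$ and the fact that converting a component $c$ to $\vec{y}$ costs exactly $k^*(c)=\ceil{m_c\alpha+n_cp}$ impostors, while an isolated node ($m_c=0,\,n_c=1$) is converted by a single impostor since $0<p<1$. The decisive difference between the two adversaries is how each values an isolated $y$-node against a converted component: the malicious adversary gains only $p$ in welfare-damage per isolated node but $2m_c\alpha+n_cp$ from a component, whereas the advertiser values every $y$-node equally at $1$.

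\emph{Malicious case.} Because component conversions dominate isolated conversions in damage, a malicious adversary converts some edge-component whenever it can afford one, so the operator avoids $y$-coordination precisely when $k<\min_c k^*(c)$. I would then maximize $\min_c k^*(c)$ over graphs: splitting the $m$ edges among several components only lowers the cheapest $k^*(c)$, so a single giant component is optimal, and since any connected component obeys $n_c\le m_c+1$, the quantity $\ceil{m_c\alpha+n_cp}$ is maximized by the tree/$\mathrm{LINE}$ with $n_c=m+1$. This gives $k_{\rm max}(M)=\ceil{m\alpha+(m+1)p}=\ceil{m\alpha+mp+p}$, and the same inequality $n_c\le m_c+1$ supplies the matching converse that no graph exceeds this value.

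\emph{Advertiser case.} Here I classify a component as \emph{efficient} when $n_c\ge k^*(c)$ (at least one converted node per impostor) and \emph{inefficient} otherwise. Since concentrating edges maximizes the margin $k^*(c)-n_c\approx m_c\alpha-n_c(1-p)$, and since isolated nodes have $n_c=k^*(c)=1$ and are therefore always efficient, any witness to ordinariness must be a single dense edge-only blob; hence $(m,\alpha,p)$ is ordinary exactly when the densest admissible component, on $d$ nodes carrying all $m$ edges, is inefficient. When it is, the operator plays that dense component together with its $2m-d$ isolated nodes: valuing each $y$-node equally, the advertiser first exhausts the cheaper isolated nodes and converts the dense component only once its budget also covers the remaining nodes, i.e. once $n_c+\min(k-k^*(c),\,2m-d)\ge\min(k,\,2m-d)$ (ties favoring conversion). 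Comparing the two allocations shows conversion first occurs at $k=2m-2d+\ceil{m\alpha+dp}$, and minimizing component nodes (maximizing the isolated bait) shows this configuration is optimal, giving the stated value. In the non-ordinary case every edge-component is efficient, so converting one is at least as impostor-cheap as isolated nodes and the bait mechanism collapses; the advertiser converts as soon as $k\ge k^*(c)$, reducing to the malicious affordability analysis and yielding $k_{\rm max}(A)=k_{\rm max}(M)$.

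The main obstacle is the converse (optimality) half of the advertiser's ordinary case: I must show that no graph, allowing arbitrary mixtures of components of varying density together with isolated nodes, lets the operator push the advertiser's conversion threshold above $2m-2d+\ceil{m\alpha+dp}$. This reduces to solving the advertiser's best response as a knapsack-type allocation across all components and bounding the threshold uniformly, for which I would argue that (i) consolidating edges into one densest component simultaneously raises $k^*(c)$ and frees the most nodes for bait, and (ii) any efficient component present can be converted cheaply and hence can only lower the threshold. A secondary check is that in the ordinary regime the dense-configuration threshold dominates the $\mathrm{LINE}$ threshold $k_{\rm max}(M)$ (comparing $(m-d)(2-p)$ against $p$), so that the maximum over the operator's graph choices is indeed the dense-configuration value.
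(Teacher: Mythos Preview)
Your proposal follows essentially the same route as the paper's own proof. For the malicious case you argue, as the paper does, that per-impostor damage is strictly higher on edge components than on isolated nodes (the paper packages this as Lemma~\ref{component slope>lone node slope}), so the adversary converts an edge component as soon as it can afford one; maximizing $\min_c k^*(c)$ then forces a single tree/LINE component and yields $\ceil{m\alpha+(m+1)p}$. For the advertiser's ordinary case you construct the same witness (one dense component on $d$ nodes plus $2m-d$ isolated ``bait'' nodes), compute the same conversion threshold $2m-2d+\ceil{m\alpha+dp}$, and sketch optimality via the same consolidation move the paper outlines (merge components to shed nodes, then densify). The paper's proof of this branch is itself only a bulleted outline, so your level of detail is comparable.

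One genuine slip: in the non-ordinary branch you write ``every edge-component is efficient.'' That is not what non-ordinary means. Non-ordinary says that \emph{for every graph} there exists \emph{at least one} component with $n_c\ge k^*(c)$; a given graph may still contain inefficient components alongside the guaranteed efficient one. Your subsequent reasoning only needs the weaker statement: once some efficient component $c$ is present, the advertiser (with ties to the adversary) converts it at $k=k^*(c)$ because it yields $n_c\ge k^*(c)=k$ $y$-nodes, so the operator's threshold is at most $\min_{c\text{ efficient}}k^*(c)$, and maximizing this over graphs collapses to the malicious LINE analysis. Fix the quantifier and the argument goes through; the paper makes the same reduction in the same informal way.
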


\noindent The proof of Theorem~\ref{kmax lemma} appears in the Appendix.

Next, we show that when $k<k_{\rm max}(M)$ (as reported above), the cost of conservatism is low; the optimal defense against a worst-case malicious adversary is asymptotically optimal against an advertiser adversary as well. 

\begin{theorem} \label{low k thm}
    For any feasible $m,\alpha, p$, when $k<k_{\rm max}(M)$, it holds that when playing against a malicious adversary, the LINE graph garners within $2p$ of optimal welfare.
    Furthermore, it holds that
    \begin{equation}
        R(M)= {\cal O}\left({1}/{\sqrt{m}}\right). % \frac{p\ceil{\sqrt{m}(\alpha+p) + p}}{m+2m\alpha+p}.
    \end{equation}
    % In particular, $R(M(k))$ vanishes for large $m$:
    % \begin{equation}
    %     \lim_{m\to\infty}\max_{k< k_{\rm max}(M)} R(M(k)) = 0.
    % \end{equation}
    
\end{theorem}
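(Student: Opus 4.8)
The plan is to prove the two claims in sequence: first nail down the welfare of ${\rm LINE}$ against the malicious adversary, then bound the regret by sandwiching $W(g^*(A),A)$ and $W(g^*(M),A)$.

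\emph{Part 1 (${\rm LINE}$ within $2p$ of optimal against $M$).} First I would compute $W({\rm LINE},M)$ exactly. The giant line component has $m$ edges and $m+1$ nodes, so its flip cost is $k^*(c)=\lceil m\alpha+(m+1)p\rceil=k_{\rm max}(M)$; since $k<k_{\rm max}(M)$ the adversary cannot flip it, and any attacks placed on it are wasted. The only remaining targets are the $m-1$ isolated nodes, each of which flips under a single attack at a welfare cost of exactly $p$. Hence the malicious best response flips $\min(k,m-1)$ isolated nodes and $W({\rm LINE},M)=2m(1+\alpha)-\min(k,m-1)p$. Since $W(g^*(M),M)\le 2m(1+\alpha)$, it remains to show that \emph{every} graph lets the malicious adversary destroy at least $(\min(k,m-1)-2)p$ of welfare. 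I would establish this using Lemma~\ref{component slope>lone node slope} (a lone node is the least-damaging target per attack): the only way to beat ${\rm LINE}$ is to deny the adversary the cheap lone-node targets entirely, which forces a ${\rm SPARSE}$-type graph with no isolated nodes, and this saves at most $(\lceil\alpha+2p\rceil-1)p\le 2p$ before the adversary begins flipping the (strictly more damaging) components. This is exactly the $2p$ slack.

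\emph{Part 2, denominator.} For the regret I would first use that $k<k_{\rm max}(M)\le k_{\rm max}(A)$, so against either adversary the operator can prevent $y$-coordination and all loss then comes from isolated nodes flipped at $p$ apiece. The denominator is handled by a monotonicity observation: because the malicious objective \emph{is} minimizing welfare, $W(g,A)\ge W(g,M)$ for every $g$, whence $W(g^*(M),A)\ge W(g^*(M),M)\ge W({\rm LINE},M)=2m(1+\alpha)-\min(k,m-1)p=\Theta(m)$ (using $p<1$).

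\emph{Part 2, numerator.} The heart of the argument is to upper-bound $W(g^*(A),A)$, i.e.\ to lower-bound the advertiser's forced loss on any graph. When coordination is prevented, the advertiser (who, in the ordinary regime, gets one $y$-agent per attack from an isolated node but strictly fewer from a large component) first exhausts all $I$ isolated nodes, and can be stopped from flipping a component only if the leftover budget $k-I$ is below the cheapest component's flip cost. I would combine this prevention condition with two counting facts: any graph with $m$ edges and $2m$ nodes has at least $m-I$ non-isolated components, so by pigeonhole the cheapest has at most $m/(m-I)$ edges and flip cost at most $(m/(m-I))(\alpha+p)+\mathcal{O}(1)$. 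Writing $J=m-I$, prevention then forces $J^2=\mathcal{O}(m)$, i.e.\ $J=\mathcal{O}(\sqrt m)$ (the $\sqrt m$ being inherited from $d=\Theta(\sqrt m)$). Consequently the advertiser flips at least $\min(k,m-1)-\mathcal{O}(\sqrt m)$ isolated nodes, so $W(g^*(A),A)\le 2m(1+\alpha)-(\min(k,m-1)-\mathcal{O}(\sqrt m))p$ (allowing coordination only increases the loss, so this is the binding case). Subtracting the denominator bound gives numerator $\mathcal{O}(\sqrt m)\,p$, and dividing by $\Theta(m)$ yields $R(M)=\mathcal{O}(1/\sqrt m)$.

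\emph{Main obstacle.} I expect the isolated-node lower bound of Part 2 to be the hard part: making rigorous that prevention genuinely requires $k-I$ to fall short of the cheapest component requires a careful analysis of the advertiser's knapsack-style best response, the value-per-attack comparison between isolated nodes and components, and the tie-breaking/ceiling effects that appear when $\alpha+p$ is near $1$ (precisely the regime $k\approx m$ where the $\Theta(\sqrt m)$ gap is realized). The universal malicious-damage bound of Part 1 is a secondary obstacle of the same flavor.
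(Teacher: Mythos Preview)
Your plan is essentially the paper's own argument. Part~2 is packaged in the paper as two lemmas: the denominator chain $W(g^*(M),A)\ge W(g^*(M),M)\ge W({\rm LINE},M)=W({\rm LINE},A)=\Theta(m)$ is exactly how the proof proceeds, and your numerator bound is the content of Lemma~\ref{sparsish has trivial benefit}, which shows $W(g^*(A),A)-W({\rm LINE},A)=\mathcal{O}(\sqrt m)$ via the same forest/component-count trade-off you outline (the paper phrases it as $\min\{|C_e|-1,\,k^*(q)\}$, and since $|C_e|\le m/m_q$ one of the two terms is $\mathcal{O}(\sqrt m)$). One phrasing to tighten: ``prevention forces $J^2=\mathcal{O}(m)$'' is not literally what happens; rather, the bound is $\min(J-1,\,cm/J)=\mathcal{O}(\sqrt m)$ irrespective of $J$, so no constraint on $J$ is being extracted.

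Part~1 is where your sketch is thinnest relative to the paper. It is \emph{not} true that the only competitor to {\rm LINE} is a {\rm SPARSE}-type graph with no isolated nodes; graphs with two large tree components (so $m-2$ isolated nodes) can also outperform {\rm LINE} by $p$ for a window of $k$. The paper's Lemma~\ref{optimal malicious} handles this by reducing to forests, then arguing that a candidate $g$ can beat {\rm LINE} only if its smallest edged component $q$ satisfies $k^*(q)>k>|C_l(g)|$, and a counting argument over $m_q$ (balancing $k^*(q)$ against the number of components, much like your Part~2 numerator) rules out $2\le m_q\le m/3$; the residual cases $m_q=1$ ({\rm SPARSE}) and $m_q>m/3$ give the $2p$ and $p$ slacks respectively. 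Your identified ``secondary obstacle'' is real and requires this extra case analysis.
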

\vspace{2mm}
\begin{proof}
% \subsubsection*{Proof of Theorem~\ref{low k thm}}
Note that all supporting lemmas appear in the Appendix.
The first result of the Theorem comes directly from Lemma~\ref{optimal malicious}.
%%%%%%%%%%%%%% New Proof with only big O %%%%%%%%%%%%%%%%%%%%%%%%%%%

We begin deriving the second result by seeking an upper bound on $R(M)$ for $k<k_{max}(M)$.
We present lower bound  on $W(g^*(M(k)),A(k))$ as 
$$W(g^*(M(k)),A(k))\ge W(g^*(M(k)),M(k))$$
$$\ge W({\rm LINE},M(k))=W({\rm LINE},A(k)).$$
Note the last equality holds as $k_{max}(M)$ was derived in Theorem~\ref{kmax lemma} via a LINE graph, thus leaving any adversary with trivial choices while $k<k_{max}(M)$. 
This allows Lemma~\ref{sparsish has trivial benefit} to directly apply to the numerator of $R(M)$ producing upper bound $$R(M)\le \frac{{\cal O}(\sqrt{m})}{W({\rm LINE},M(k))}.$$
When $k<k_{max}(M)$ it can be observed that
$W({\rm LINE},M(k))=2m+2m\alpha-p*\min\{k,m-1\}=\Omega(m).$
Thus 
\begin{equation}
    R(M)={\cal O}\left(\frac{1}{\sqrt{m}}\right)
\end{equation}
as desired.
\end{proof}
\begin{theorem} \label{med k thm}
%     When $k\in[k_{\rm max}(M),k_{\rm max}(A))$, we have that the cost of conservatism is bounded away from zero:
%     \begin{equation}
%         R(M(k)) \geq \left\lfloor\frac{k}{3}\right\rfloor(2\alpha+2p)-kp.
%     \end{equation}
     When $k\in[k_{\rm max}(M),k_{\rm max}(A)]$, we have that the maximum achievable welfare in the presence of an advertiser is significantly higher than that available in the presence of a malicious adversary:
     \begin{equation}
        \label{raw regret}
         W(g^*(A),A(k))-W(g^*(M)),M(k) \hspace{-.5mm}\geq \hspace{-.5mm} \left\lfloor\hspace{-.5mm}\frac{k}{3}\hspace{-.5mm}\right\rfloor\hspace{-1mm}(2\alpha+2p)-kp. 
     \end{equation}
\end{theorem}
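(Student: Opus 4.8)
The plan is to bound the two welfare terms separately and subtract. Write $W_{\max}=2m(1+\alpha)$ for the welfare of a fully $x$-coordinated graph, and recall that flipping a component $c$ (with $m_c$ edges, $n_c$ nodes) from $\vec{x}$ to $\vec{y}$ costs the adversary $k^*(c)=\lceil m_c\alpha+n_c p\rceil$ impostors and destroys exactly $2m_c\alpha+n_c p$ welfare, whereas flipping an isolated node costs one impostor and destroys only $p$. I will lower-bound the advertiser-optimal welfare $W(g^*(A),A(k))$ and upper-bound the malicious-optimal welfare $W(g^*(M),M(k))$.

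For the advertiser term I would use the graph $g$ witnessing $k_{\rm max}(A)$ in Theorem~\ref{kmax lemma} (a dense component together with isolated nodes). Since $k\le k_{\rm max}(A)$, by definition of $k_{\rm max}(A)$ this $g$ admits no emergent flip of any edge-bearing component against the advertiser, so the advertiser's best response can only convert isolated nodes, at most one per impostor. Hence at most $k$ nodes turn $y$, each costing $p$ welfare, giving
\begin{equation}
W(g^*(A),A(k))\ge W(g,A(k))\ge W_{\max}-kp.
\end{equation}

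For the malicious term I must show the complementary bound $W(g^*(M),M(k))\le W_{\max}-\lfloor k/3\rfloor(2\alpha+2p)$; equivalently, that against a malicious adversary \emph{every} graph suffers welfare loss at least $\lfloor k/3\rfloor(2\alpha+2p)$. The engine is a per-component efficiency bound: for any edge-bearing component,
\begin{equation}
\frac{2m_c\alpha+n_c p}{\lceil m_c\alpha+n_c p\rceil}\ge \frac{2\alpha+2p}{3},
\end{equation}
i.e. each impostor spent on a flip destroys at least $(2\alpha+2p)/3$ welfare. The extremal case is the two-node component, where $k^*=\lceil\alpha+2p\rceil\le 3$ (as $\alpha+2p<3$) and the loss is exactly $2\alpha+2p$, so equality can hold. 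Given this, the malicious adversary---which by Lemma~\ref{component slope>lone node slope} always prefers components to isolated nodes---realizes loss at least $\lfloor k/3\rfloor(2\alpha+2p)$ with $k$ impostors. Combining the two displays yields
\begin{equation}
W(g^*(A),A(k))-W(g^*(M),M(k))\ge \left(W_{\max}-kp\right)-\left(W_{\max}-\left\lfloor\tfrac{k}{3}\right\rfloor(2\alpha+2p)\right)=\left\lfloor\tfrac{k}{3}\right\rfloor(2\alpha+2p)-kp.
\end{equation}

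The main obstacle is the malicious upper bound, and within it the efficiency inequality. The difficulty is genuinely in the ceiling: the relaxation $\lceil x\rceil\le x+1$ is too lossy precisely when $\alpha+2p>2$ (where the two-node bound is tight), so I expect to argue by cases on the integer value of $\lceil m_c\alpha+n_c p\rceil$, using $m_c\ge 1$, $n_c\ge 2$, and $p<\alpha\le 1$ to verify that no denser or larger component is more efficient for the adversary than the two-node component. A secondary nuisance is budget packing: since flip costs are integers the adversary may waste a few impostors, but the $\lfloor\,\cdot\,/3\rfloor$ rounding in the statement is designed to absorb exactly this slack, and the dominance of component flips over isolated-node flips guarantees the adversary never does worse than the efficiency bound predicts.
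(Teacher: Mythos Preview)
Your lower bound on $W(g^*(A),A(k))$ via the $k_{\max}(A)$-witnessing graph is exactly the paper's route (the paper uses ${\rm COMPLETE}$, which plays the same role), and the bound $W_{\max}-kp$ matches.

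For the malicious upper bound, the paper proceeds differently: it invokes Lemma~\ref{sparse optimal middle k}, whose argument is \emph{global} rather than per-component. That lemma first shows $\sum_{c}k^*(c)\le 3m$ for every graph, and then compares each graph's welfare curve against a fictitious ${\rm SPARSE}^*$ in which every two-node component costs exactly $3$ to flip. Both curves agree at $k=0$ and at $k=3m$, and any non-${\rm SPARSE}$ graph necessarily contains an edge component strictly more efficient than ${\rm SPARSE}^*$'s (which the malicious adversary takes first), so every graph's welfare is dominated by ${\rm SPARSE}^*$'s, yielding the bound directly.

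Your per-component efficiency inequality is a natural ingredient, but as the sole engine it does not deliver the malicious bound, and the gap is not the one you flag. The inequality controls only impostors spent on \emph{edge} components; once those are exhausted the remaining budget buys isolated nodes at efficiency $p<(2\alpha+2p)/3$. Concretely, if $K_e:=\sum_{c\in C_e(g)}k^*(c)<k$, your ingredients give only
\[
\text{loss}\ \ge\ K_e\cdot\frac{2\alpha+2p}{3}+(k-K_e)\,p,
\]
which falls short of $\lfloor k/3\rfloor(2\alpha+2p)$ by roughly $(k-K_e)(2\alpha-p)/3$, a quantity that can be $\Theta(m)$ (take a single dense component so $K_e\approx m\alpha$ while $k\approx k_{\max}(A)\approx 2m+m\alpha$). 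The $\lfloor\cdot/3\rfloor$ rounding absorbs at most $O(1)$ integrality slack, not this shortfall. To close the gap you must invoke a global constraint---precisely the paper's $\sum_c k^*(c)\le 3m$ together with the endpoint-matching comparison---so that a small $K_e$ forces $\sum_{c\in C_e}\Delta W(c)$ to already be large on its own; at that point you are reproducing Lemma~\ref{sparse optimal middle k} rather than bypassing it.
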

\vspace{2mm}

\begin{proof}
Note that all supporting Lemmas appear in full in the Appendix.
    To compute a lower bound on $\eqref{raw regret}$ we seek a lower bound on 
    \begin{align*}
        & W(g^*(A),A(k))\ge W(\rm{COMPLETE},A(k)) \\
        & \ge 2m+2m\alpha-kp,
    \end{align*}
    and Lemma~\ref{sparse optimal middle k} gives us $$W(g^*(M(k),M))\le2m(1+\alpha)-\left\lfloor{k}/{3}\right\rfloor(2\alpha+2p).$$
    % We also note that as $|C_l(SPARSE^*)|=0$ and all components are uniform in $C_e(SPARSE^*)$, thus adversaries have no choices to make. W may then can take $W(SPARSE^*,M(k))=W(SPARSE^*,A(k))$
    % So $W(SPARSE^*,A(k))=2m+2m\alpha-\floor{\frac{k}{3}}(2\alpha+2p)$
    
    Calculating \eqref{raw regret} we get 
    \begin{align*}
        & 2m+2m\alpha-kp-\left(2m+2m\alpha-\left\lfloor{k}/{3}\right\rfloor(2\alpha+2p)\right) \\
        & = \left\lfloor{k}/{3}\right\rfloor(2\alpha+2p)-kp.
    \end{align*}
    % Noting that $k$ in our domain depends linearly on m and $\alpha>p$ conclude regret in this domain is positive and linear with $m$.
    % \begin{enumerate}
    %     \item For these parameters, whenever $k>k_{\rm max}(M)$, the optimal defense against a malicious adversary is a sparse graph. $${\rm SPARSE}\in g^*(M(k)).$$
    %     \item The optimal defense against an advertiser gives welfare \emph{at least as good} as a complete graph:
    %     $$W\left(g^*(A),A(k)\right) \geq W({\rm COMPLETE},A(k)).$$
    %     \item This lets us compute a lower bound on our raw regret:
    %     $${\rm reg} \geq W({\rm COMPLETE},A(k)) - W({\rm SPARSE},M(k)).$$
    % \end{enumerate}
\end{proof}

\section{Simplified Example} \label{simple section}
%In this section we will restrict the system planner from choosing between all graphs to choosing one of three designs: a line graph design, complete design, or a sparse design.
In this section we will restrict the system planner to choosing between the three canonical graph designs, SPARSE, COMPLETE, and LINE.
In this environment we completely characterize optimal decision making.
% Immediately, we observe that only complete features unique welfare between adversaries.
% This is because by Lemma~\ref{component slope>lone node slope} and definitions of adversaries both adversaries will always prefer to take a LINE component if they can as it features a majority of nodes in the graph.
% All components in SPARSE are uniform so adversaries are present no choices in this model.

\begin{figure}
\centerline{\includegraphics[scale=0.6]{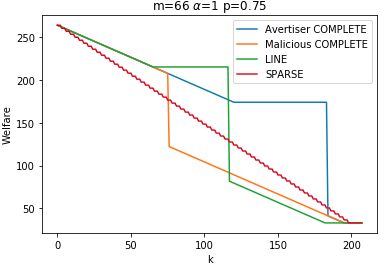}}
\caption{We demonstrate that $k_{max}(M)=117$ and $k_{max}(A)=183$, realized by LINE and COMPLETE respectively, serve as important breakpoints in optimal decision making. In this case $\alpha+2p>2$, so SPARSE dominates for either adversary after their respective $k_{max}$ value.}
\label{fig:a+2p>2}
\vs\vs\vs\vs\vs\vs
\end{figure}

\subsection{Advertiser Characterization}
We may observe Theorem~\ref{kmax lemma}'s $k_{max}$ values were realized using LINE and COMPLETE graphs and Lemma~\ref{component slope>lone node slope} provides meaning for the $k_{max}$ regions as relatively high welfare zones. 
% So given T,k if $k<k_{max}(M)$ we can readily play a LINE and COMPLETE if $k<k_{max}(A)$.
So if $k<k_{max}(M)$ we may readily play LINE.
Now using $k_r$ from the proof of Lemma~\ref{sparse optimal middle k} we observe two cases on $\alpha,p$. They are
% \begin{equation}
%     k_r({\rm SPARSE})=
%         \begin{cases}
%         1m & \mbox{if } \alpha+2p<1 \\
%         2m & \mbox{if } \alpha+2p<2 \\
%         3m & \mbox{if } \alpha+2p<3 \\
%         \end{cases}
% \end{equation}
% and further note that 
% \begin{equation}
%     k_r(COMPLETE)=2m-2d+\ceil{m\alpha+dp}\ge 2m
% \end{equation}
% for sufficiently large m given $\alpha,p$.
 $\alpha+2p>2$ which leads to
\begin{equation}
    g^*(A(K))=
    \begin{cases} 
       {\rm LINE} & k<k_{max}(M) \\
       {\rm COMPLETE} & k_{max}(M)\le k<k_{max}(A)\\
       {\rm SPARSE} & else\\
   \end{cases}
\end{equation}
or alternatively $\alpha+2p\le2$
\begin{equation}
    g^*(A(K))=
    \begin{cases} 
       {\rm LINE} & k<k_{max}(M) \\
       {\rm COMPLETE} & k_{max}(M)\le k.\\
   \end{cases}
\end{equation}

% Immediately we may recall from lemma~\ref{component slope>lone node slope} and lemma~\ref{kmax lemma} for $k<k_{max}(A)$ we desire graph designs such that $C_y(g\mid k) \subseteq C_l(g)$. 
% Again applying lemma~\ref{kmax lemma} we obtained $k_{max}(A)=2m-2d+\ceil{m\alpha+dp}$ which was conveniently obtained with a complete graph. 
% Recall theorem~\ref{optimal advertiser} which provides a general form for single non-trivial component graphs $k'=2m-2n_c+\ceil{m\alpha+n_cp}$ such that $c\in C_e(line)$. As per line graphs definition $n_c=m+1$ and thus we obtain $$k'=2m-2(m+1)+\ceil{m\alpha+(m+1)p}=\ceil{m\alpha+mp+p}-1.$$
% In this $k<k_{max}(A)$ range we may also ask questions about sparse however, as $C_{l}(Sparse)=\emptyset$ and $k^*(c)=\ceil{\alpha+2p}\le3$ such that $c\in C_e(Sparse)$ is not dependent on m it quickly becomes uninteresting.

% So using that we apply theorem~\ref{optimal advertiser}'s principle that the optimal graph in $k<k_{max}(A)$ that optimal graphs maximize $n_c$ thus when presented with a choice between line and complete graphs in this region we may select line graphs readily when they are available and complete graphs otherwise.
% Once $k\ge k_{max}(A)$
\subsection{Malicious Characterization}
We once again are able to readily play LINE if $k<k_{max}(M)$ by similar arguments.
However once $k$ exceeds $k_{max}(M)$ we must chose between potentially all 3 graphs.
% % We may compute that if $k\ge k_{max}(M)$ then $$W(COMPLETE(M(K),(M(K)))\ge W(LINE,M(K))).$$
% % By asking at what k does $$W(COM.(M(K),(M(K)))= W(LINE,M(k_{max}(M))))$$
% % and we obtain
% We will define $k_{int}$ as the intersection between SPARSE and COMPLETE(M(K)) or $$2m\alpha-\left\lfloor\frac{k}{\ceil{\alpha+2p}}\right\rfloor(2\alpha+2p)\le-dp-kp+\ceil{m\alpha+dp}p,$$
% noting that the k satisfying intersection between SPARSE and LINE is always equal or greater then $k_{int}$ so LINE no longer need be considered.

It can be shown that $k_r(COMPLETE)\geq k_r(LINE)$ and by linear arguments COMPLETE will at least weakly dominant LINE for all $k\geq k_{max}(M)$. 
Thus we will define $k_{int}$ as the intersection between SPARSE and COMPLETE and use it as decision point between the two.

\begin{figure}
\centerline{\includegraphics[scale=0.6]{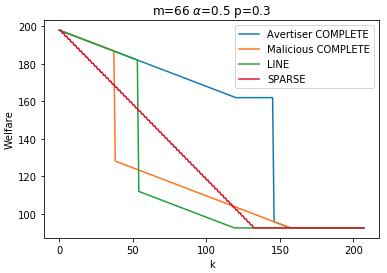}}
\caption{In addition to using $k_{max}$ for optimal decision making, we must compute a $k_{int}$ representing the intersection between SPARSE and LINE because $\alpha+2p\le2$.}
\label{fig:a+2p<2}
\vs\vs\vs\vs\vs\vs
\end{figure}

% For our first case where $\alpha+2p>2$ we are able to follow the argument presented in the proof of Lemma~\ref{sparse optimal middle k} such that SPARSE is optimal with Lemma~\ref{component slope>lone node slope} presenting the only exception along with $k_{max}$.
% In the case
For our first case where $\alpha+2p>2$ we find
\begin{equation}
    g^*(A(M))=
    \begin{cases} 
       {\rm LINE} & k<k_{max}(M) \\
       {\rm SPARSE} & k_{max}(M)\ge k.\\
   \end{cases}
\end{equation}
If $\alpha+2p\le 2$ we simply use $k_{int}$ as a break point where SPARSE switches to COMPLETE.
\begin{equation}
    g^*(A(M))=
    \begin{cases} 
       {\rm LINE} & k<k_{max}(M) \\
       {\rm SPARSE} & k_{max}(M)\le k<k_{int}\\
       {\rm COMPLETE} & k_{int} \le k.
   \end{cases}
\end{equation}
We showcase the results with sample parameters for $\alpha+2p>2$ in Figure~\ref{fig:a+2p>2} and $\alpha+2p\leq2$ in Figure~\ref{fig:a+2p<2}.

\bibliographystyle{ieeetr}
\bibliography{library}

\appendix
\subsection{Notation}
% We call component $c$ \emph{advertiser resistant} if $\frac{n_c}{\ceil{m_c\alpha+n_cp}}<1$. %\vspace{-3mm}

% \begin{equation}
%     \bar{U}^A(g,k)=\max_{K:|K|=k}U^A(g,K)
% \end{equation}
% \begin{equation}
%     W^{ad}(g,k,\alpha,p)=\sum_{c\in C_y(g\mid k)}n_c
% \end{equation}

% We define $k_{max}(T,m)=max(\bar{k}(g)\mid g\in G(m))$
For component $c$ in some graph we define
$\Delta W(c) := 2m_c\alpha+n_cp$
as the loss in welfare when component $c$ switches from being stable on $x$ to $y$.

We define several sets
\begin{itemize}
    \item $C(g)=\{c_1,c_2,c_3,\dots,c_n\}$ such that $c_i\in C(g)$ is a component in graph g
    \item $C_e(g)=\{c_1,c_2,c_3,\dots,c_n\}$ such that $c_i\in C_e(g)$ is an edged component in graph g
    \item $C_l(g)=\{c_1,c_2,c_3,\dots,c_n\}$ such that $c_i\in C_l(g)$ is a node with degree 0
    \item $C_x(g|T(k))=\{c_1,c_2,c_3,\dots,c_n\}$ such that $c_i\in C_x(g\mid T(k))$ is a component with emergent state $\vec{x}$
    \item $C_y(g|T(k))=\{c_1,c_2,c_3,\dots,c_n\}$ such that $c_i\in C_y(g\mid T(k))$ is a component with emergent state $\vec{y}$.
\end{itemize}

We use 
\begin{equation}
    W(g,T(K))=2m+2m\alpha-\sum_{c\in C_y(g,A(k))}\Delta W(c)
\end{equation}
as a convenient way to denote system welfare given topology and the adversary.

We term some component $c$ \emph{advertiser resistant} if $c$ satisfies $n_c<k^*(c)$.
% We denote $d$ to denote the fewest number of nodes $m$ edges may occupy, so integer $d$ satisfies $\binom{d-1}{2}<m\leq\binom{d}{2}$.
% Expanding we get 
% \begin{equation}
%     m_c+m_c\alpha\le m_c+n_cp+k^*(c) 
% \end{equation}
% Which we can solve for k.
% \begin{equation}
%     m_c\alpha+n_cp\le k^*(c)
% \end{equation}
% And finally we can acknowledge that $k\in\mathbb{Z}$ and once the emergent state of a component is $\vec{y}$ an adversary may not further reduce a components welfare. 
% Thus we can obtain
% modeling the amount of attack an intelligent adversary will employ on component c should he decide to take it.
\subsection{Proof of Theorem~\ref{kmax lemma}}
% \begin{proof}
    \label{kmax proof}
    First we derive $k_{max}(M)$.
    %Let $g$ be any graph and  $c=q$ where q satisfies $\min_q k^*(q) \forall q\in C_e(g)$.
    Let $g$ be an arbitrary graph and $q\in\argmin_{c\in C_e(g)}k^*(c)$.
    Lemma~\ref{component slope>lone node slope} gives the  malicious adversary strictly prefers $C_y(g|M(k))\subseteq C_e(g)$.
    Therefore we can simply seek to design $q$ to maximize $k^*(q)$.
    First,
    \vs\vs
    \begin{equation}
        k_{max}(M)=\max_{n_q,m_q} \ceil{m_q\alpha+n_qp}=\ceil{m\alpha+(m+1)p}\vs
    \end{equation}
    simply by problem parameters.
    % This completes the proof for $k_{max}(W,\alpha,p)$.
    Now we seek $k_{max}(A)$.
    % \begin{equation}
    %     \label{cond:kmax(A)}
    %     k_{max}(A)=\max_g\bar{k}(g)
    % \end{equation}
    % where $\bar{k}(g)=\max\left\{k\mid C_e(g)\cap C_y(g;A(k))=\emptyset\right\}$.
    We begin with the case where in the triple $(m,\alpha,p)$ is not ordinary.
    This implies all graphs contain a nonempty set $S$ of non advertiser resistant components and we take $c=\argmin_{s\in S} k^*(s)$.
    Thus at $k=k^*(c)$, $n_c\geq k$ and intuitively the advertiser may weakly improve $U^a$ by taking component $c$ over lone nodes. This behavior is similar to the malicious adversary and we conclude $k_{max}(A)=k_{max}(M)$ in this case.
    
    For brevity we present an outline of the case where $(m,\alpha,p)$ is ordinary.
    \begin{itemize}
        \item Given $(m,\alpha,p)$ is ordinary, let all components in graph $g$ be advertiser resistant
        \item If there are plentiful lone nodes, advertiser resistance gives the advertiser prefers to attack the lone nodes.
        \item Once all lone nodes are stable on $y$ the advertiser may gather excess attacks to make up the difference in $U^a(\cdot)$ that occurs taking some non-trivial component.
        \item It can be shown that the minimal $k$ where this happens may be weakly increased by removing two components in $g$ and creating a new component with fewer nodes.
        \item We finish optimizing the graph with a single non-trivial component by making making it as near complete as possible. \hfill\QED
    \end{itemize}
\begin{lemma}
    \label{sparse optimal middle k}
We introduce an upper bound of $$W(g^*(M),M)\le2m(1+\alpha)-\left\lfloor\frac{k}{3}\right\rfloor(2\alpha+2p)$$ for $k\in[k_{max}(M), k_{max}(A))$. 
\end{lemma}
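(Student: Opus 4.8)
The plan is to use the equivalent reformulation of the bound. Since $g^*(M)$ \emph{maximizes} welfare, the inequality $W(g^*(M),M)\le 2m(1+\alpha)-\lfloor k/3\rfloor(2\alpha+2p)$ is equivalent to the claim that on \emph{every} graph $g$ the malicious adversary can inflict welfare loss at least $\lfloor k/3\rfloor(2\alpha+2p)$. Using the appendix identity $W(g,M(k))=2m(1+\alpha)-\sum_{c\in C_y}\Delta W(c)$, the adversary's problem on a fixed $g$ is a knapsack: choose a set $S$ of components to convert to $\vec{y}$, paying $k^*(c)$ impostors and gaining loss $\Delta W(c)=2m_c\alpha+n_cp$ per chosen component, subject to $\sum_{c\in S}k^*(c)\le k$. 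I would exhibit one feasible $S$ whose loss is at least $\lfloor k/3\rfloor(2\alpha+2p)$; since the true adversary plays a best response, this lower-bounds its damage and hence upper-bounds the achievable welfare, uniformly over $g$.

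Two structural facts drive the argument. First, in this regime $k\ge k_{\max}(M)$, and the proof of Theorem~\ref{kmax lemma} shows $k_{\max}(M)=\max_{m_q,n_q}\lceil m_q\alpha+n_qp\rceil$ is the largest $k^*$ attainable by any single component; hence every component of every graph satisfies $k^*(c)\le k$ and is individually affordable. Second, I would establish the per-component efficiency bound $\Delta W(c)/k^*(c)\ge (2\alpha+2p)/3=:s_0$ for every \emph{edged} component: with $\Delta W(c)=2m_c\alpha+n_cp$ and $k^*(c)=\lceil m_c\alpha+n_cp\rceil$, the single-edge SPARSE component is the tight case, since $\alpha\le 1$ and $p<\alpha$ force $\alpha+2p<3$ and so $k^*=\lceil\alpha+2p\rceil\le 3$, giving slope exactly $(2\alpha+2p)/3$; larger components are verified to have slope at least $s_0$ by controlling the ceiling. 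Lone nodes are the exception, carrying slope only $p<s_0$.

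With these in hand I would split on the total flip cost $K_{\rm tot}:=\sum_c k^*(c)$. If $k\ge K_{\rm tot}$, the adversary converts the whole graph, incurring loss $\sum_c\Delta W(c)=2m(\alpha+p)$; since $k<k_{\max}(A)\le 3m$ (using $\alpha\le1$, $p<1$, $d\ge1$ in the $k_{\max}(A)$ formula) we have $\lfloor k/3\rfloor\le m$, so $2m(\alpha+p)\ge\lfloor k/3\rfloor(2\alpha+2p)$. If $k<K_{\rm tot}$, the adversary greedily converts edged components (each affordable, each of slope $\ge s_0$) and, when budget remains, tops up with weight-one lone nodes to exhaust the budget; because every edged component pays at rate at least $s_0$ while the factor $3$ and the floor leave slack of order $k/3$ against the fractional optimum $s_0k$, the accumulated loss should clear $\lfloor k/3\rfloor(2\alpha+2p)$.

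The main obstacle is exactly this last packing step: whole components are indivisible, so greedy conversion can leave a budget gap as large as one (possibly large) component, and lone-node filler only pays at the inferior rate $p$. The resolution I expect is that the slack was engineered for precisely this situation. A large unused gap can only be caused by a large \emph{unconverted} component, but large components have slope strictly above $s_0$ (approaching $(2\alpha+p)/(\alpha+p)$ for trees and up to $2$ for dense components), so converting that single component alone already recovers more than $s_0$ times its weight. Formalizing this trade-off---showing that either the greedily packed edged weight already reaches $3\lfloor k/3\rfloor$, or one large component's above-$s_0$ slope compensates for the resulting gap---is the crux, and is where the coarse $\lfloor k/3\rfloor$ (rather than the exact $\lfloor k/k_0\rfloor$ that SPARSE itself would yield, with $k_0=\lceil\alpha+2p\rceil$) earns its keep.
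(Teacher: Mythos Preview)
Your reformulation is sound and the per-component slope bound you rely on is indeed true: using $k^*(c)\le m_c+n_c$ (since $\alpha\le1$, $p<1$), the inequality $\Delta W(c)/k^*(c)\ge(2\alpha+2p)/3$ for an edged component reduces to $(2\alpha-p)(2m_c-n_c)\ge0$, which holds because $p<\alpha$ and $n_c\le m_c+1\le 2m_c$. But your route differs from the paper's, and the packing step you flag as ``the crux'' is not actually closed by the heuristics you sketch.

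The paper does not build a feasible adversary set on an arbitrary $g$. It instead argues that a fictional graph SPARSE$^*$ (all single-edge components, with $k^*=3$ imposed regardless of $\alpha,p$) weakly dominates every real graph for $k\ge k_{\max}(M)$: both welfare curves start at $2m(1+\alpha)$, both reach the fully-converted value $2m-2mp$ at total cost $\sum_{c}k^*(c)\le 3m$, SPARSE$^*$ descends with uniform slope $s_0$, while any other graph has an edged component of strictly higher slope that the malicious adversary takes first, forcing its curve below the SPARSE$^*$ staircase. The stated bound is then simply the welfare of SPARSE$^*$. What this buys over your approach is that the single global fact $\sum_c k^*(c)\le 3m$ replaces the component-by-component packing.

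Your greedy-plus-filler construction can genuinely fall short of the target. Take $m=10$, $\alpha=1$, $p=0.9$ (so $k_{\max}(M)=20$, $k_{\max}(A)=25$), and let $g$ be two disjoint $5$-edge paths together with $8$ isolated vertices. Each path has $k^*=\lceil 10.4\rceil=11$, so at $k=20$ only one path fits; the best feasible set is one path plus all eight lone nodes, giving loss $15.4+7.2=22.6$, whereas the target is $\lfloor20/3\rfloor(2\alpha+2p)=22.8$. Your ``large components have extra slope'' rescue does not recover this shortfall, because the two candidate paths have identical (not elevated) slope and neither alone nor combined with filler reaches $22.8$. This same example shows that the stated inequality can fail by a small integrality defect, so the paper's endpoint/slope argument is itself somewhat heuristic on this point. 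To push your approach through for what Theorem~3 actually needs, you should either carry an additive $O(1)$ slack in the bound, or abandon the single constructed $S$ and argue about the adversary's \emph{optimal} loss directly via the global weight bound $\sum_c k^*(c)\le 3m$ together with convexity of the greedy-by-slope welfare curve, as the paper does.
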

\begin{proof}
    Observe the cost to take every component in the graph is given by
    \begin{equation}\label{e:original utility}
        k_{r}(g)=\sum_{c\in C(g)}k^*(c)
    \end{equation}
    Noting $\max_{\alpha,p}(k^*(c))=n_c+m_c$ and
    \begin{equation}
        \sum_{c\in C(g)}m_c=m \And \sum_{c\in C}n_c=2m
    \end{equation}
    by problem parameters we may obtain $\max_{\alpha,p,g}k_{r}(g)=3m$ across all possible graphs.
    
    Now we examine a SPARSE design with $\alpha+2p>2$.
    We compute $k^*(c)=\ceil{\alpha+2p}=3$ for $\forall c\in C_e({\rm SPARSE})$ so we obtain $k_r({\rm SPARSE})=3m$.
    % and we may compute \vs
    % \begin{equation}
    %     k_r({\rm SPARSE})=\sum_{c\in C({\rm SPARSE})}3\vs
    % \end{equation}
    % Recalling $|C_e({\rm SPARSE})|$=m we obtain $k_r({\rm SPARSE})=3m$.
    
    Let $g'$ be some graph such that $g'\neq{\rm SPARSE}$.
    $C_l(g')$ must then be nonempty and by Theorem~\ref{kmax lemma} we know for $\exists c\in C_e(g')$ such that $c\in C_y(g)$.
    Thus $g'$ features two unique $\frac{\Delta W(c)}{k^*(c)}$ for $c\in C_l(g')$ and $c \in C_e(g')$.
    By Lemma~\ref{component slope>lone node slope} we know $\frac{\Delta W(s)}{k^*(s)}>k^*(c)p$ for $s\in C_e(SPARSE)$.
    Knowing $W(g',M(0))=W({\rm SPARSE},M(0))$ and $W(g',M(3m))=W({\rm SPARSE},M(3m))$, we deduce then that $\frac{W(c)}{k^*(c)}>\frac{W(s)}{k^*(s)}$ for $c\in C_e(g')$.
    Thus $W({\rm SPARSE},M(k))\ge W(g',M(k))$ for $k\ge k_{max}(M) \And \forall g'$, thus making SPARSE optimal.
    
    Now we define SPARSE$^*$ to be a fictional graph identical with SPARSE except it satisfies $k^*(c)=3$ for $c\in C_e(SPARSE^*)$ for all $\alpha,p$.
    So $k_r({\rm SPARSE}^*)=3m$ and we may then follow similar arguments that $k_r(g')$ is maximally $3m$ and $g'$ must contain some component satisfying $\frac{W(c)}{k^*(c)}>\frac{W(s)}{k^*(s)}$ for $c\in C_e(g'), s\in C_e({\rm SPARSE}^*)$.
    Thus $W({\rm SPARSE}^*,M)\ge W(g',M)$ by the same arguments as sparse $\forall g',\alpha,p$.
    
    So we may present the welfare function of ${\rm SPARSE}^*$ as
    \begin{equation}
        W(g^*(M),M(k))\le2m(1+\alpha)-\left\lfloor{k}/{3}\right\rfloor(2\alpha+2p).\vs\vs
    \end{equation}
\end{proof}

\begin{lemma}
    \label{sparsish has trivial benefit}
    When $k<k_{\rm max}(M)$, the welfare benefit of $g^*(A(k))$ over LINE is ${\cal O}(\sqrt{m})$: 
    \begin{align}
        B(k)    &\triangleq W\left(g^*(A),A(k)\right) \hs-\hs W({\rm LINE},A(k)) = {\cal O}(\sqrt{m}) % \\
                %&\leq p \ceil{\sqrt{m}(\alpha+p) + p}.
    \end{align}
\end{lemma}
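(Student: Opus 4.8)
The plan is to express the benefit as a difference of welfare losses and reduce the whole statement to a single combinatorial inequality bounding how much of the advertiser's budget any graph can force it to waste. Using the welfare decomposition $W(g,A(k)) = 2m+2m\alpha - \sum_{c\in C_y}\Delta W(c)$ together with the fact (established while deriving $k_{\rm max}(M)$ in Theorem~\ref{kmax lemma}) that against ${\rm LINE}$ the advertiser can only convert isolated nodes when $k<k_{\rm max}(M)$, so that $W({\rm LINE},A(k))=2m+2m\alpha-p\min\{k,m-1\}$, I would first write
\[
B(k) = p\min\{k,m-1\} - \sum_{c\in C_y(g^*(A),A(k))}\Delta W(c).
\]
Since $\Delta W(c) = 2m_c\alpha + n_c p \ge p\,n_c$ and $p\le 1$, it suffices to show that on \emph{every} graph the advertiser's best response creates at least $\min\{k,m-1\}-{\cal O}(\sqrt m)$ agents playing $y$; because $B(k)=\max_g\,[W(g,A(k))-W({\rm LINE},A(k))]$, this for-all-$g$ bound is exactly what is needed, and the resulting $B(k)\le{\cal O}(\sqrt m)$ is complemented by $B(k)\ge 0$ from optimality of $g^*(A)$.

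Next I would bound the advertiser's ``$y$-player deficit'' $\delta := \min\{k,m-1\} - Y^*$, where $Y^*$ is the number of $y$-players induced on an arbitrary graph $g$ with $L$ isolated nodes. If $L\ge\min\{k,m-1\}$, the advertiser simply converts isolated nodes (unit cost, unit value, the best possible ratio) and $\delta\le 0$. The interesting regime is $L<\min\{k,m-1\}$: here the advertiser converts all $L$ isolated nodes and retains budget $k-L$. The structural engine is an edge-counting identity: every edged component is connected, so $m_c\ge n_c-1$, and summing over components shows there are at least $m-L$ edged components sharing the $m$ available edges. If the advertiser can afford no further component, each must satisfy $k^*(c)>k-L$; since a connected component on $m_c$ edges has $n_c\le m_c+1$ and hence $k^*(c)=\lceil m_c\alpha+n_c p\rceil \le m_c(\alpha+p)+p+1$, unaffordability forces every component to carry roughly $m_c \gtrsim (k-L)/(\alpha+p)$ edges.

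Combining the lower bound of $m-L$ components with this per-component edge requirement against the total edge budget $m$ yields the quadratic constraint $(m-L)(k-L)={\cal O}\!\left(m(\alpha+p)\right)$. Writing $L=\min\{k,m-1\}-\delta$ and using $\alpha,p\le 1$, this collapses to $\delta^2={\cal O}(m)$, so $\delta={\cal O}(\sqrt m)$ and therefore $Y^*\ge\min\{k,m-1\}-{\cal O}(\sqrt m)$. When the advertiser \emph{does} convert extra components, $Y^*$ only grows (and the extra $2m_c\alpha$ terms only enlarge the loss), so the bound is easier there; feeding $Y^*$ back gives $B(k)\le p\delta\le\delta={\cal O}(\sqrt m)$, as desired.

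The main obstacle is making the third step rigorous, because the advertiser solves a knapsack-type problem and greedily mixes isolated nodes and components rather than converting a clean cheapest prefix. I expect the delicate point to be showing that the two competing ways a designer might waste budget are both self-defeating: packing many nodes into few components requires dense, edge-hungry components, whereas spreading them across many tree-like components makes each one cheap to convert, and it is precisely this tension that pins the wasteable budget at $\Theta(\sqrt m)$ (so the bound is tight, with the extremal graph being roughly $\sqrt m$ medium components rather than ${\rm LINE}$, ${\rm SPARSE}$, or ${\rm COMPLETE}$). Handling the ceiling terms in $k^*(c)$ and separating the boundary cases $k\le m-1$ versus $k$ near $k_{\rm max}(M)$ will require separate but routine bookkeeping that does not affect the leading $\sqrt m$ order.
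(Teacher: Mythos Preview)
Your reduction to a universal lower bound on $Y^*$ is where the argument breaks. You write that it suffices to show that on \emph{every} graph the advertiser achieves $Y^*\ge \min\{k,m-1\}-{\cal O}(\sqrt m)$, but this is simply false. Take $g={\rm SPARSE}$ with $\alpha,p$ large enough that $k^*(c)=\lceil\alpha+2p\rceil=3$ for each two-node component, and let $k=m-1$. The advertiser converts $\lfloor(m-1)/3\rfloor$ components, so $Y^*\approx 2m/3$ and the deficit $\min\{k,m-1\}-Y^*\approx m/3=\Theta(m)$, not ${\cal O}(\sqrt m)$. Your inequality $\sum\Delta W(c)\ge pY^*$ is correct but discards exactly the information that saves you here: the $2m_c\alpha$ terms. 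On ${\rm SPARSE}$ the advertiser wastes a third of its budget, yet the operator is still worse off than on ${\rm LINE}$, because each converted component costs the operator $2\alpha+2p>3p$. Your hand-wave ``the extra $2m_c\alpha$ terms only enlarge the loss, so the bound is easier there'' is pointing at the right fact but does not repair the argument, since you have already committed to the $Y^*$ route.

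The paper avoids this by using Lemma~\ref{component slope>lone node slope} up front to argue that $g^*(A)$ must satisfy $C_e\cap C_y=\emptyset$, i.e.\ the optimal graph for the operator is one on which the advertiser converts only isolated nodes. Once you are restricted to that class (and then, without loss, to forests), only two quantities matter: the number of non-trivial components $|C_e|$ (each extra component shields one more node, giving the bound $B(k)\le p(|C_e|-1)$) and the size $m_c$ of the smallest component (which controls how long the advertiser can be kept off the edged part, giving $B(k)\le p\lceil m_c(\alpha+p)+p\rceil$). The product constraint $|C_e|\cdot m_c\le m$ then yields the $\sqrt m$. Your ``few expensive components vs.\ many cheap components'' intuition is exactly this trade-off, so you have the right picture; you just need to restrict to graphs where the advertiser hits only isolated nodes \emph{before} passing to a node-count bound, rather than trying to prove a $Y^*$ inequality over all graphs.
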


\begin{proof}
    First we may appreciate the implications of Lemma \ref{component slope>lone node slope}
    such that any candidate optimal graph $g$ must satisfy $C_e(g)\cap C_y(g|A(k))=\emptyset$.
    
    And by Theorem~\ref{kmax lemma}'s $k_{max}$ values were obtained via single nontrivial component graphs.
    % proof we know all graphs $|C_e(g)|=1$ satisfy that.
    However even the most sparse of these designs, the line graph, leaves $(m-1)$ nodes unconnected.
    That is, 
    \begin{equation} \label{eq:welf-line}
    W\left({\rm LINE},A(k)\right) = 2m(1+\alpha) - p\min\{k,m-1\},
    \end{equation}
    So if we can satisfy the first requirement with a many-tree component graph, we could increase welfare.
    
    % Idea: it's nice if we lose no edges, but maybe we can lose no edges while also losing fewer nodes.
    
    % To satisfy the first requirement we first must guarantee that graph g is advertiser resistant.
    % $\forall c\in C$,  $\frac{n_c}{\ceil{m_c\alpha+n_cp}}<1$.
    % This entails that an advertiser adversary will prefer to take lone nodes one for one then an edged component. 
    % We also assume that $\alpha+p>1$ such that for we guaranteed that with large components it will at least be possible to satisfy the first requirement. 
    
    \addtolength{\textheight}{-1mm}
    
    Observe that any tree component will cover $m_c+1$ nodes and any component with at least one cycle will cover strictly less. 
    % So any non-tree component is unproductive towards covering more nodes then a line graph.
    Therefore we can let $n_c=m_c+1$.
    Note that $\sum_{c\in C_e(g)}m_c+1=m+|C_e(g)|$ thus forest graph $g$ contains $|C_l(g)|-|C_l(LINE)|=|C_e(g)|-1$ less lone nodes.
    Thus, an optimal graph $g$ in this regime is composed entirely of tree components and lose none to an advertiser.
    That is, it holds that
    \begin{equation} \label{eq:welf-ad-opt}
        W\left(g,A(k)\right) \leq 2m(1+\alpha) - p\min\{k,m-1\} + p(|C_e(g)|-1).
    \end{equation}
    This is due to the fact that each additional component protects at most one additional node, potentially improving welfare relative to~\eqref{eq:welf-line} by an amount $p$ for some $k$.
    Combining~\eqref{eq:welf-line} and~\eqref{eq:welf-ad-opt}, we obtain upper bound.
    \begin{equation}
        B(k) \leq (|C_e(g)|-1)p. \label{eq:bound1}
    \end{equation}
    
    % Which represents an upper bound on $B(k)$.
    % We present one more bound as $\ceil{m_c\alpha+n_cp}$.
    % We can present two upper bounds of the maximal benefit of playing greedily, $(|C_e|-1)p$ and $\ceil{m_c\alpha+n_cp}$.
    
    % The first of which can be intuitively found as $(|C_e|-1)p$ as this is the the number of lone nodes the greedy design protects over a LINE graph.
   
    % The second bound can be derived from the strength of individual components.
    % Here we presume that the other bound is not met and there are plentiful extra nodes on a competing line graph.
    Consider a situation such that forest graph $g^*$ has $|C_e(g^*)|$ non trivial components and $m-|C_e(g^*)|$ lone nodes.
    Let $c\in \argmin_{q\in C_e(g^*)}k^*(q)$.
    % Let $c$ be the minimizer of $k^*(c)= \ceil{m_c\alpha+n_cp}\forall c\in C_e$. % $m_c$ the edges of the smallest component such that the smallest component requires 
    %At $k=m-|C|+\ceil{m_c\alpha+n_cp}$, an advertiser adversary's optimal strategy is to take all lone nodes and component $c$;
    When $m-|C_e(g^*)|<k<m-|C_e(g^*)|+\ceil{m_c\alpha+n_cp}$, suppose the advertiser takes all lone nodes but no component.
    Then we have
    \begin{align}
        B(k)    &=      2m(1+\alpha)-(m-|C_e(g^*)|)p-(2m(1+\alpha)-kp) \nonumber\\
                % &=      (k-m+|C_e(g^*)|)p \nonumber \\
                &\leq   (m-|C_e(g^*)|+\ceil{m_c\alpha+n_cp} - m +|C_e(g^*)|)p \nonumber \\
                &= \ceil{m_c\alpha+n_cp}p. \label{eq:bound2}
    \end{align}
    %Substituting the stated upper bound of k of $m-|C|+\ceil{m_c\alpha+n_cp}$ into the upper bound of welfare improvement produces $\ceil{m_c\alpha+n_cp}p$.
    
    Combining~\eqref{eq:bound1} and~\eqref{eq:bound2} and noting $g^*$ is a forest, we have
    \begin{equation} \label{eq:Bbound}
        B(k) \leq p\cdot\min\{(|C_e(g^*)|-1),\ceil{m_c(\alpha+p) + p}\}.
    \end{equation}
    %So between the two bounds a combined bound of welfare improvement can be constructed as $min(|C|,\ceil{m_c\alpha+n_cp})p$ as neither may be exceeded.
    Recall that $|C_e(g^*)|\leq\floor{\frac{m}{m_c}}$. 
    Then if $m_c>\sqrt{m}$, 
    \begin{align}
        |C_e|     \leq      \left\lfloor{m}/{m_c}\right\rfloor 
                \leq   \left\lfloor{m}/{\sqrt{m}}\right\rfloor  
                = \floor{\sqrt{m}}. \label{eq:cbound}
    \end{align}
    % $\floor{\frac{m}{m_c}}\le \floor{\frac{m}{\sqrt{m}}}$ and $\floor{\frac{m}{\sqrt{m}}}=\floor{\sqrt{m}}$.
    % Thus it can be concluded that $m_c>\sqrt{m}\implies|C|\le\floor{\sqrt{m}}$.
    That is, either  $m_c\leq\sqrt{m}$ or $|C_e(g^*)|\leq\sqrt{m}.$
    Thus it is clear from~\eqref{eq:Bbound} that
    \begin{equation*}
        B(k)={\cal O}(\sqrt{m})
    \end{equation*}
    as desired.
\end{proof}

\begin{lemma}
    \label{component slope>lone node slope}
    If $c \in C_{\rm e}(g)$ and $c' \in C_l(g)$ then $\frac{\Delta W(c)}{k^*(c)}>\frac{\Delta W(c')}{k^*(c')}$ or equivalently $\Delta W(c)>k^*(c)p$. Thus we can guarantee system designers have a strong preference that $C_y(g\mid T(k)) \subseteq C_l(g)$.
\end{lemma}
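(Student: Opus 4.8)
The plan is to reduce the ratio inequality to the equivalent form $\Delta W(c) > k^*(c)\,p$ stated in the lemma, and then verify that form by a direct estimate that exploits the fact that an edged component has at least one edge and at least two nodes. First I would record the lone-node base case: a component $c' \in C_l(g)$ has $m_{c'}=0$ and $n_{c'}=1$, so $\Delta W(c') = 2\cdot 0\cdot\alpha + 1\cdot p = p$ and $k^*(c') = \ceil{0\cdot\alpha + p} = 1$, where the last equality uses $0 < p < 1$ (itself a consequence of $p \in (0,\alpha)$ and $\alpha \in (0,1]$). Hence $\Delta W(c')/k^*(c') = p$, which is precisely why the statement records ``equivalently $\Delta W(c)>k^*(c)p$''; it therefore suffices to prove $\Delta W(c)/k^*(c) > p$ for every edged component $c$.

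Next I would set $x := m_c\alpha + n_c p > 0$, so that $k^*(c) = \ceil{x}$ while $\Delta W(c) = 2m_c\alpha + n_c p = m_c\alpha + x$. The one analytic step that disposes of the ceiling is the elementary bound $\ceil{x} < x+1$; since $p>0$, this gives $p\,k^*(c) = p\ceil{x} < p(x+1)$, so it is enough to establish the stronger, ceiling-free inequality $m_c\alpha + x > p(x+1)$, equivalently $m_c\alpha + (1-p)x > p$. Substituting $x = m_c\alpha + n_c p$ and collecting terms rewrites the left-hand side as $m_c\alpha(2-p) + n_c p(1-p)$, reducing the claim to a smooth polynomial inequality in $m_c, n_c, \alpha, p$.

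Finally I would bound this expression from below using the structural facts $m_c \ge 1$ and $n_c \ge 2$ (valid because $c$ is an edged component, so it contains an edge joining two distinct nodes), together with $\alpha > p$ and $2-p>0$. This yields $m_c\alpha(2-p) + n_c p(1-p) \ge \alpha(2-p) + 2p(1-p) > p(2-p) + 2p(1-p) = p(4-3p)$, which exceeds $p$ since $0<p<1$ forces $4-3p>1$. The resulting chain $\Delta W(c) = m_c\alpha + x > p(x+1) > p\,k^*(c)$ gives the ratio inequality. The ``strong preference'' conclusion is then immediate from $W(g,T(k)) = 2m(1+\alpha) - \sum_{c\in C_y}\Delta W(c)$: the inequality $\Delta W(c) > k^*(c)\,p$ says that flipping any single edged component to $\vec{y}$ destroys strictly more welfare than flipping the $k^*(c)$ lone nodes that the same attack budget would instead absorb, so, holding the adversary's consumed budget fixed, the designer's welfare is strictly larger when the flipped components lie in $C_l(g)$, i.e.\ $C_y(g\mid T(k)) \subseteq C_l(g)$.

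I expect the only genuine obstacle to be arranging the estimate so that strictness survives the ceiling; isolating the bound $\ceil{x}<x+1$ at the outset handles this cleanly, after which the remaining arithmetic is routine.
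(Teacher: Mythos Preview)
Your proof is correct and follows essentially the same route as the paper: both compute $\Delta W(c')/k^*(c')=p$ for a lone node, then handle the ceiling via the bound $\ceil{x}<x+1$ (the paper writes this as $R:=\ceil{x}-x\in[0,1)$) to reduce the claim to the polynomial inequality $m_c\alpha(2-p)+n_cp(1-p)>p$. The only difference is cosmetic: the paper finishes in one line with $m_c\alpha(2-p)+n_cp(1-p)\ge m_c\alpha\ge\alpha>p$, whereas you take the slightly longer detour through $p(4-3p)>p$; both are valid.
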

\begin{proof}
    First note that
    
    ${\Delta W(c')}/{k^*(c')}={p}/{\ceil{0\cdot\alpha+p}}=p,\vspace{0mm}$
    \vspace{-1mm}
    and
    $$\frac{\Delta W(c)}{k^*(c)}=\frac{2m_c+2m_c\alpha-(2m_c-n_cp)}{\ceil{m_c\alpha-n_cp}}=\frac{2m_c\alpha+n_cp}{\ceil{m_c\alpha-n_cp}}.$$
    Thus, we wish to show 
    $p<\frac{2m_c\alpha+n_cp}{\ceil{m_c\alpha-n_cp}}$,
    or equivalently
    $\ceil{m_c\alpha-n_cp}p<2m_c\alpha+n_cp$.
    
    Let $R=\ceil{m_c\alpha-n_cp}-m_c\alpha-n_cp$ such that $R\in [0,1)$.
    So we may obtain $m_c\alpha p+n_cp^2+Rp<2m_c\alpha+n_cp$, which may be simplified further into $Rp<m_c\alpha+n_cp(1-p)+m_c\alpha(1-p)$.
    We may further observe that $m_c\alpha+n_cp(1-p)+m_c\alpha(1-p)\ge\alpha$ and clearly $p\ge Rp$.
    Thus $p<\alpha \implies Rp<m_c\alpha+n_cp(1-p)+m_c\alpha(1-p)$.
\end{proof}

\begin{lemma}
    \label{optimal malicious}
    When $k<k_{\rm max}(M),$ a line graph is within $2p$ of optimal against a malicious adversary for $k<k_{max}(M)$:
    \begin{equation} \label{eq:line optimal}
         W(g^*(M(k)),M(k)) - W({\rm LINE},M(k)) \leq 2p.
    \end{equation}
    and it holds that
    \begin{equation}
        W(g^*(M(k)),M(k)) \geq 2m(1+\alpha) - p\min\{k,m-1\}.
    \end{equation}
\end{lemma}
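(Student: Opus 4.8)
The plan is to anchor both displayed inequalities on an exact evaluation of the welfare of the LINE graph. First I would compute $W({\rm LINE},M(k))$ directly. The line graph is a single giant component with $m$ edges and $m+1$ nodes, together with $m-1$ isolated nodes. Since $k_{\rm max}(M)$ was derived in Theorem~\ref{kmax lemma} as exactly $k^*$ of this giant component, the hypothesis $k<k_{\rm max}(M)$ makes the giant unflippable, so a malicious adversary can only attack lone nodes; each such node has $k^*(c')=\lceil p\rceil=1$ and $\Delta W(c')=p$, so flipping it costs one attack and reduces welfare by $p$. The adversary therefore flips $\min\{k,m-1\}$ lone nodes, giving $W({\rm LINE},M(k))=2m(1+\alpha)-p\min\{k,m-1\}$. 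Because $g^*(M(k))$ is welfare-optimal it is at least as good as LINE, which yields the second displayed inequality immediately.

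For the first inequality I would bound the welfare of the optimal design from above. By Lemma~\ref{component slope>lone node slope} the designer strictly prefers $C_y(g^*\mid M(k))\subseteq C_l(g^*)$, and full protection is feasible when $k<k_{\rm max}(M)$ (LINE witnesses this), so I take $g^*$ to protect every edged component. The adversary is then again restricted to lone nodes and $W(g^*,M(k))=2m(1+\alpha)-p\min\{k,\ell\}$, where $\ell:=|C_l(g^*)|$. Subtracting the LINE value, the target bound $W(g^*,M(k))-W({\rm LINE},M(k))\le 2p$ reduces to the purely combinatorial claim $\min\{k,m-1\}-\min\{k,\ell\}\le 2$.

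To establish this I would lower-bound $\ell$ from the protection constraint. Each protected edged component satisfies $k^*(c)=\lceil m_c\alpha+n_cp\rceil>k$, hence $m_c\alpha+n_cp>k$, and since every connected component obeys $n_c\le m_c+1$ this forces $m_c>(k-p)/(\alpha+p)$. As the $|C_e(g^*)|$ edged components share $m$ edges in total, I get $|C_e(g^*)|<m(\alpha+p)/(k-p)$, and combining with $\ell\ge 2m-\sum_c(m_c+1)=m-|C_e(g^*)|$ gives $\ell> m\,(k-\alpha-2p)/(k-p)$. I would then verify $\min\{k,m-1\}-\min\{k,\ell\}\le 2$ by cases on $k$: when $k\le 2$ the left-hand side is at most $k\le 2$ for any $\ell$; when $k\ge 3$ one invokes $\alpha+2p<3$ and $p<\alpha\le 1$ together with the bound on $\ell$ to rule out $\ell$ lying more than two below $\min\{k,m-1\}$ (an algebraic estimate that collapses to a sign check of the form $(\alpha+p-4)k+5p+\alpha<0$).

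The main obstacle is precisely this final case analysis: showing quantitatively that the protection requirement never lets the designer shave off more than (effectively) two lone nodes relative to LINE. The extremal configuration is worth isolating as a sanity check, since it both confirms the constant $2p$ is tight and pins down where the inequality binds: when $\alpha+2p>2$ and $k=2$, the ${\rm SPARSE}$ graph protects every single-edge component, suffers zero loss, and beats LINE by exactly $pk=2p$.
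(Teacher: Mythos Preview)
Your approach is essentially correct and takes a genuinely different route from the paper's.

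The paper argues structurally: it asks what a graph $g$ must look like to beat LINE at some $k$, reduces to forests (since trees simultaneously maximise $k^*(q)$ and minimise the number of lone nodes for a given edge count), replaces every component by copies of the smallest one, and then shows that for component sizes $m_q\in\{2,\dots,\lfloor m/3\rfloor\}$ the resulting graph cannot satisfy both ``protect the smallest component'' and ``have fewer lone nodes than LINE'' at the same $k$. The boundary cases $m_q=1$ (SPARSE) and $m_q>m/3$ (at most two nontrivial components) are then dispatched directly to give the $2p$ and $p$ gaps.

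You instead bound the lone-node count $\ell$ analytically from the protection constraint $k^*(c)>k$ and reduce everything to the combinatorial inequality $\min\{k,m-1\}-\min\{k,\ell\}\le 2$. This is shorter and does not need the forest reduction at all, since $n_c\le m_c+1$ already holds for every connected component. The paper's route, on the other hand, makes the extremal configurations (SPARSE for small $m_q$, near-LINE graphs for large $m_q$) explicit, which feeds the examples in Section~\ref{simple section}.

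Two places in your sketch deserve a sentence more. First, the step ``so I take $g^*$ to protect every edged component'' is not a preference argument: you need that if the malicious adversary flips any edged component of $g^*$, then $W(g^*,M(k))\le W({\rm LINE},M(k))$ already. This follows from Lemma~\ref{component slope>lone node slope} when the adversary exhausts its budget (damage $>kp\ge p\min\{k,m-1\}$); the residual case where the adversary runs out of targets before spending $k$ should be noted and disposed of. Second, your sign check $(\alpha+p-4)k+5p+\alpha<0$ is exactly what one gets after substituting $m\ge k+1$, so it covers the sub-case $k\le m-1$; when $k\ge m$ you need the separate (easier) observation that $|C_e|<m(\alpha+p)/(k-p)\le m(\alpha+p)/(m-p)<4$, hence $\ell\ge m-3$ and the difference is again at most $2$.
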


\begin{proof}
    First we seek conditions on arbitrary graph $g$ such that it may satisfy 
    \begin{equation}
        \label{cond:better then line}
        W(g|M(k))>W(LINE|M(k))
    \end{equation}
    for some $k\in [0,k_{\rm max}(M))$.
    Theorem \ref{kmax lemma} and Lemma \ref{component slope>lone node slope} give that for
     $q\in\argmin_{c\in C_e(g)} k^*(c)$
    \begin{equation}
        \label{cond:component q condition}
        k^*(q)>k
    \end{equation}
     must hold or (\ref{cond:better then line}) will not be satisfied.
     
    We may also deduce that to satisfy (\ref{cond:better then line}) there must exist some $k$ where $|C_y(g|M(k))|<|C_y({\rm LINE}|M(k))|$ knowing $C_y(g|M(k))\subseteq C_l(g|M(k))$, thus we can conclude the following two conditions must hold for $g$ to satisfy (\ref{cond:better then line}):
    \vs
    \begin{equation}
        \label{cond: protect more nodes}
        |C_l(g)|<|C_l({\rm LINE})|=m-1 \mbox{ and}\vs\vs\vs
    \end{equation}
    \vs
    \begin{equation}
    \vs
        \label{cond:kc condition}
        k>|C_l(g)|.
    \end{equation}
    % \begin{enumerate}
    %     \label{enum:better then line}
    %     \item $asd$
    %     \item $k>|C_l(g)|$.
    % \end{enumerate}
    That is, graph $g$ has fewer lone nodes then LINE, so there may exist a $k$ where $g$ will feature less $\vec{y}$ lone nodes, and $k$ must be large enough to be in this zone.
    Let $k_g$ be the minimal such $k$ to satisfy (\ref{cond:kc condition}) on graph $g$.
    % We define $k_g=|C_l(g)|+1$ as the minimal such k where (\ref{cond:kc condition}) is realizable.
    
    Now we note that $k^*(q)$ may be increased and $k_g$ may be decreased by following a simple algorithm where all components are destroyed and replaced with tree components with corresponding edge counts. 
    So we seek forest graphs satisfying %(\ref{cond: protect more nodes})
    $k^*(q)>k_g$.

    On forest graphs we may easily compute $k_g=2m-m-|C_e(g)|+1=m-|C_e(g)|+1$ as tree components always satisfy $n_c=m_c+1$. 
    We may once again optimize forest graph $g$ then by removing all components except $q$ and and producing as many copies of $q$ as possible.
    % It will leave $k^*(q)$ unchanged and potentially decrease $k_g$ as we have removed larger components and replaced them with more numerous smaller ones.
    However, the process will not be feasible if $m_q \nmid m$, so we use $\frac{m}{m_q}$ as an upper bound on $|C_e(g)|$ while maintaining a given $k^*(q)$ thus producing lower bound
    % If $m_q\mid m$ the process will intuitively produce $\frac{m}{m_q}$ components, if not there will be some remainder edges and they may not become a new component without risking reducing $k^*(q)$
%
    \begin{equation}
        \label{bound:k_c lowerbound}
        \underbar{k}_g:=m-{m}/{m_q}+1=m({m_q-1})/{m_q}+1\leq k_g.
    \end{equation}
    Since \eqref{bound:k_c lowerbound} is independent of $\alpha,p$, we take $a=p=1$ to deduce upper bound
    $\bar{k}^*(q)=2m_q+1\geq\ceil{m_q\alpha+n_qp}=k^*(q).$
    We now show $k_g\geq k^*(q)$ holds for $m_q\in\{2,3,4,...,\floor{\frac{m}{3}}\}$ by showing $\underbar{k}_g\geq\bar{k}^*(q)$.
    It follows algebraically that
    \begin{equation}
        \underbar{k}_g\geq\bar{k}^*(q)\Leftrightarrow m\ge {2m_q^2}/({m_q-1}),
    \end{equation}
    and the right hand statement is always true since
    \begin{align}
        2\frac{m_q^2}{m_q-1}\leq2\frac{m_q^2+m_q-2}{m_q-1}=2m_q+4\leq m, \label{eq:random}
    \end{align}
    where the inequalities follow from $2\leq m_q\leq m/3$ and $m\geq12$.
    Thus (\ref{cond:component q condition}) never holds for graphs satisfying $2\leq m_q\leq m/3$.
    % It follows algebraically that $m\geq\frac{2m_q^2}{m_q-1}$ and as $m_q\geq2$, $\frac{2m_q^2+2m_q-4}{m_q-1}\geq\frac{2m_q^2}{m_q-1}$ holds.
    % Now we can consider the conditions on the statement 
    % $m\geq2(m_q+2)$
    % where we can readily maximize $m_q$
    % $$m\geq\frac{2m}{3}+4$$
    % which holds for all $m\geq 12$.

    If $m_q=1$ (i.e., $g={\rm SPARSE}$), it can be shown that $W({\rm SPARSE},M(K))-W({\rm LINE},M(k))\le2p$.
    If $m_q>m/3$, the resulting $g$ can have at most $2$ nontrivial components, and it can be shown that $W(g,M(K))-W({\rm LINE},M(k))\le p.$
\end{proof}       

\end{document}